
\documentclass[conference]{IEEEtran}
\ifCLASSINFOpdf
\else
\fi
\hyphenation{op-tical net-works semi-conduc-tor}

\usepackage[utf8]{inputenc}

\usepackage{framed, color}
\usepackage{ifthen}
\usepackage{multirow}
\usepackage{hyperref}
\usepackage{xspace}
\usepackage{relsize}  
\usepackage{alltt} 
\usepackage{listings}
\usepackage{pifont}
\usepackage[colorinlistoftodos,prependcaption,textsize=tiny]{todonotes}
\usepackage{xargs}
\usepackage{amsmath}
\usepackage{amsthm}
\usepackage{amssymb}
\usepackage{pifont}

\usepackage{tabularx}
\usepackage{rotating}
\usepackage{boxedminipage}
\usepackage{ascii}
\usepackage{multirow}

\usepackage[framemethod=tikz]{mdframed}
\newmdenv[innerlinewidth=0.5pt, roundcorner=4pt,linecolor=gray,innerleftmargin=0pt,
innerrightmargin=1pt,innertopmargin=2pt,innerbottommargin=0pt]{quotes}

\newif\ifdraft
\draftfalse

\newif\ifdoubleblind
\doubleblindfalse 

\usepackage{enumitem}
\setlist[enumerate]{leftmargin=1.5\parindent,labelindent=\parindent}
\setlist[itemize]{leftmargin=1.5\parindent,labelindent=\parindent}
\setlist[description]{leftmargin=\parindent}
%
%

\definecolor{DarkBlue}{rgb}{0.0859, 0.308, 0.523}
\definecolor{DarkOrange}{rgb}{0.8, 0.4, 0.0}
\definecolor{DarkGreen}{rgb}{0.00,0.40,0.00}
\definecolor{ScarletRed}{rgb}{0.60,0.00,0.00}
\definecolor{AlmostWhite}{rgb}{0.80,0.80,0.80}

\newenvironment{grammar}{\begin{alltt}\color{DarkBlue}\bfseries}{\end{alltt}\vspace{-.5\baselineskip}}
\newcommand{\lterm}[1]{\textcolor{DarkOrange}{\bfseries\ttfamily\MakeUppercase{#1}}}  
\let\rterm=\lterm  
\newcommand{\pterm}[1]{\rterm{\underline{#1}}}  
\newcommand{\expandsinto}{\textcolor{black}{::=}}   

\newenvironment{sample}{\begin{alltt}\color{DarkBlue}\bfseries}{\end{alltt}}
\newenvironment{sampleandarrow}{%
\begin{minipage}[b]{0.55\linewidth}
\begin{sample}
}{
\end{sample}
\end{minipage} 
\begin{minipage}[b]{2.1cm}
{\includegraphics*[width=2cm]{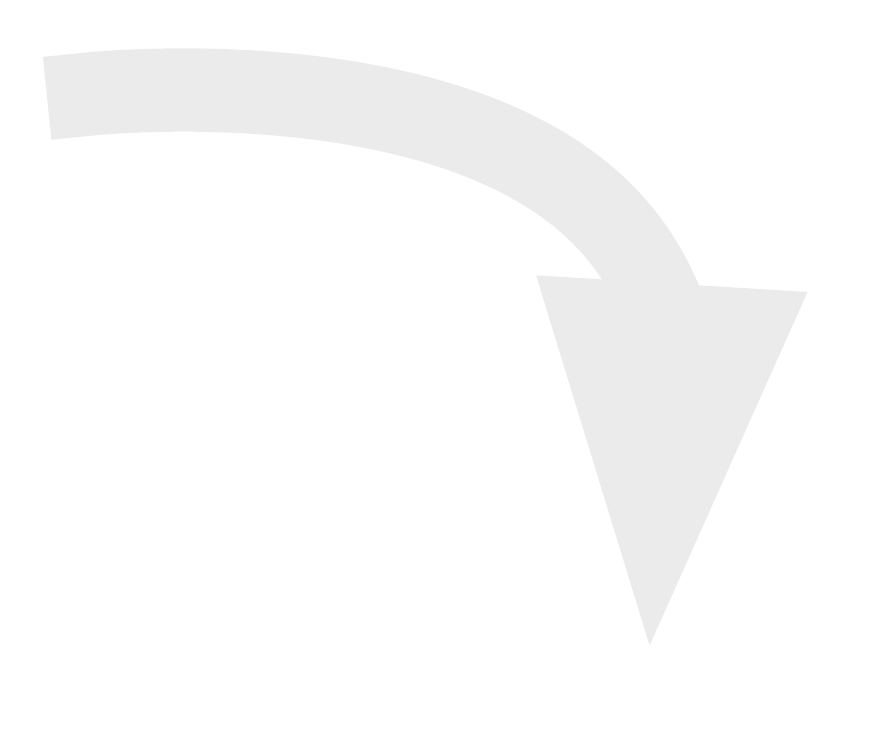}}
\end{minipage}
}

\newcommand{\mathid}[1]{\text{\rmfamily\textit{#1}}}

\def\|#1|{\mathid{#1}}

\newcommand{\codeid}[1]{\texttt{#1}}

\def\<#1>{\codeid{#1}}

\def\bs{\(\backslash\)}


\newcommand{\PASS}{\textcolor{DarkGreen}{\ding{52}}\xspace}
\newcommand{\FAIL}{\textcolor{ScarletRed}{\ding{56}}\xspace}

\newcommand{\reg}[1]{\ensuremath{\text{ \%#1}}}
\newcommand{\opcode}[4]{
& \\
$(f_0,s_0,\tau_0) \xrightarrow{#1} (f_1,_js_2,\tau_1)$: & \\
& \\
 $#2$ & $#3$ \\
 $#4$ & \\
 & \\
 \hline
}


\ifdoubleblind
\def\EXAGRAM{{\smaller EXAGRAM}\xspace}
\else
\def\EXAGRAM{{\smaller AUTOGRAM}\xspace}
\fi
\def\AUTOGRAM{{\smaller AUTOGRAM}\xspace}

\def\CSV{{\smaller CSV}\xspace}
\def\DAIKON{{\smaller DAIKON}\xspace}

\def\GLADE{{\smaller GLADE}\xspace}
\def\INI{{\smaller INI}\xspace}
\def\JAVA{Java\xspace}

\def\JSON{{\smaller JSON}\xspace}
\def\PDF{{\smaller PDF}\xspace}

\def\URL{{\smaller URL}\xspace}
\def\URLs{{\smaller URLs}\xspace}

\definecolor{rltred}{rgb}{0.5,0,0}
\definecolor{rltgreen}{rgb}{0,0.5,0}
\definecolor{rltblue}{rgb}{0,0,0.5}
\definecolor{todoorange}{rgb}{1, 0.8, 0.4}
\definecolor{todoblue}{rgb}{0.4, 0.8, 1.0}
\definecolor{todogreen}{rgb}{0.8, 1.0, 0.4}
\definecolor{todopurple}{rgb}{0.8, 0.2, 0.8}
\definecolor{todocyan}{rgb}{0.6, 1.0, 1.0}
\ifdraft
    \overfullrule 3pt
    \marginparwidth=1.3cm
    \marginparsep=5pt
    \newcommand\remark[1]{%
      \mymarginpar{\raggedright\hbadness=10000\tiny\it #1\par}}
  \newcommandx{\matthias}[2][1=]{\todo[bordercolor=todoblue!80!black,color=todoblue, fancyline, size=\small, inline]{%
  \sf\small\textbf{%
  \ifx&#1&%
  \else%
  #1\raisebox{.1ex}{$\blacktriangleright$}%
  \fi%
  MH:} #2}}
  \newcommandx{\andreas}[2][1=]{\todo[bordercolor=todoorange!80!black,color=todoorange, fancyline, size=\small, inline]{%
  \sf\small\textbf{%
  \ifx&#1&%
  \else%
  #1\raisebox{.1ex}{$\blacktriangleright$}%
  \fi%
  AZ:} #2}}

  \newcommandx{\alex}[2][1=]{\todo[bordercolor=todoorange!80!black,color=todocyan, fancyline, size=\small, inline]{%
  \sf\small\textbf{%
  \ifx&#1&%
  \else%
  #1\raisebox{.1ex}{$\blacktriangleright$}%
  \fi%
  AK:} #2}}

  \newcommandx{\all}[2][1=]{\todo[bordercolor=todocyan!80!black,color=todocyan, fancyline, size=\small, inline]{%
  \sf\small\textbf{%
  \ifx&#1&%
  \else%
  \textbf{[#1]}%
  \fi%
  } #2}}
  
  \newcommandx{\reviewer}[1]{\todo[bordercolor=rltred!80!black,color=red, fancyline, size=\small, inline]{\sf\small\textbf{RV:} #1}}
\else
    \newcommand\remark[1]{}
    \newcommandx{\matthias}[1]{}
    \newcommandx{\andreas}[1]{}
    \newcommandx{\alex}[1]{}
    \newcommandx{\reviewer}[1]{}
    \newcommandx{\all}[1]{}
\fi

\newtheorem{definition}{Definition}
\newtheorem{lemma}{Lemma}

\begin{document}
%
\title{Active Learning of Input Grammars}

\ifdoubleblind
\else
\author{\IEEEauthorblockN{Matthias Höschele}
\IEEEauthorblockA{Saarland University\\
Saarbr\"ucken Informatics Campus \\ Saarbr\"ucken, Germany\\
Email: hoeschele@cs.uni-saarland.de}
\and
\IEEEauthorblockN{Alexander Kampmann}
\IEEEauthorblockA{Saarland University\\
Saarbr\"ucken Informatics Campus \\ Saarbr\"ucken, Germany\\
Email: kampmann@st.cs.uni-saarland.de}
\and
\IEEEauthorblockN{Andreas Zeller}
\IEEEauthorblockA{Saarland University\\
Saarbr\"ucken Informatics Campus \\ Saarbr\"ucken, Germany\\
Email: zeller@cs.uni-saarland.de}}
\fi


%


\maketitle


\begin{abstract}
Knowing the precise format of a program's input is 
a necessary prerequisite for systematic testing.
Given a program and a small set of sample inputs, we (1)~\emph{track the data flow of inputs} to aggregate input fragments that share the same data flow through program execution into lexical and syntactic entities; (2)~assign these entities \emph{names} that are based on the associated variable and function identifiers; and (3)~systematically \emph{generalize production rules} by means of membership queries.  As a result, we need only a \emph{minimal set of sample inputs} to obtain human-readable \emph{context-free grammars} that reflect valid input structure.  In our evaluation on inputs like \URLs, spreadsheets, or configuration files, our \EXAGRAM prototype obtains input grammars that are both accurate and very readable---and that can be directly fed into test generators for comprehensive automated testing.
\end{abstract}


%
\IEEEpeerreviewmaketitle

\section{Introduction}
\label{introduction}

Systematic testing of any program requires knowledge what makes a valid input for the program---formally, the \emph{language} accepted by the program.  To this end, computer science has introduced \emph{formal languages} including regular expressions, and context-free grammars, which are long part of the standard computer science education canon.  The problem of automatically \emph{inferring} a language from a set of samples is well-known in computer linguistics as well as for compression algorithms.  Inferring the input language for a given \emph{program,} however, only recently has attracted the attention of researchers.  The \AUTOGRAM tool~\cite{hoeschele2016} observes the dynamic data flow of an input through the program to derive a matching input grammar.  The \GLADE tool~\cite{bastani2016} uses membership queries to refine a grammar from a set of inputs.  The \emph{Learn\&Fuzz} approach by Godefroid et al.~\cite{godefroid2017} uses machine learning to infer structural properties, also from a set of inputs.  All these approaches are motivated by using the inferred grammars for automatic \emph{test generation:} Given an inferred grammar, one can easily derive a \emph{producer} that ``fuzzes'' the program with millions of grammar-conforming inputs.

One weakness of all these approaches is that they rely on a \emph{set of available input samples;} and the variety and quality of these input samples determines the features of the resulting language model.  If all input samples only contain positive integers, for instance, the resulting language model will never encompass negative integers, or floating-point numbers. This is a general problem of all approaches for language induction and invariant inference.

In this paper, we present an approach that combines an existing grammar learning technique, \AUTOGRAM, with \emph{active membership queries} to systematically \emph{generalize learned grammars.}  Our \EXAGRAM prototype starts with a program and a minimum of input samples, barely covering the essential features of the input language.  Then, \EXAGRAM will systematically produce possible generalizations to check whether they would be accepted as well; and if so, extend the input grammar accordingly.  The result is a grammar that is both general and accurate, and which can be directly fed into test generators for comprehensive automated testing.

Let us illustrate these abstract concepts with an example.  Given the input in \autoref{fig:url-input}, \EXAGRAM will first derive the initial concrete grammar in \autoref{fig:url-concrete-grammar}.  To do so, \EXAGRAM follows the approach pioneered by \AUTOGRAM (\autoref{fig:overview}), namely following the flow of input fragments into individual functions and variables of the program, and adopting their identifiers to derive the names of nonterminals. The \<'http'> characters, for instance, are stored in a variable named \<protocol>; this results in the \lterm{PROTOCOL} grammar rule.

\begin{figure}[t]
\begin{sample}
http://user:pass@www.google.com:80/command
?foo=bar&lorem=ipsum#fragment
\end{sample}
\caption{Sample \URL input}
\label{fig:url-input}
\end{figure}

\begin{figure}[t]\centering
\begin{grammar}
\lterm{SPEC} \expandsinto \rterm{STRING} '?' \rterm{QUERY} '#' \rterm{REF}
\lterm{STRING} \expandsinto \rterm{PROTOCOL} '://' \rterm{AUTHORITY} \rterm{PATH}
\lterm{AUTHORITY} \expandsinto \rterm{USERINFO} '@' \rterm{HOST} ':' \rterm{PORT}
\lterm{PROTOCOL} \expandsinto 'http'
\lterm{USERINFO} \expandsinto 'user:pass'
\lterm{HOST} \expandsinto 'www.google.com'
\lterm{PORT} \expandsinto '80'
\lterm{PATH} \expandsinto '/command'
\lterm{QUERY} \expandsinto 'foo=bar&lorem=ipsum'
\lterm{REF} \expandsinto 'fragment'
\end{grammar}
\caption{Initial concrete grammar derived by \EXAGRAM from \<java.net.URL> processing the input from \autoref{fig:url-input}.  Using membership queries, \EXAGRAM generalizes this into the grammar in~\autoref{fig:url-general-grammar}.}
\label{fig:url-concrete-grammar}
\vspace{-\baselineskip}
\end{figure}

\begin{figure*}[t]
\includegraphics*[width=\textwidth]{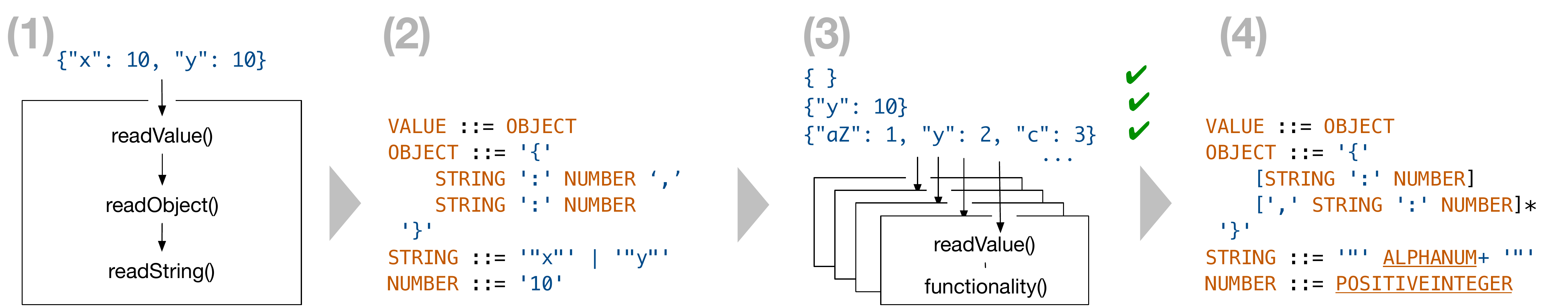}
\vspace{-1.5\baselineskip}
\caption{How \EXAGRAM works.  Given a program and a sample input~(1), \EXAGRAM tracks the flow of input characters through the program to derive a concrete grammar~(2).  In each grammar rule, \EXAGRAM then determines whether parts are optional, can be repeated, or can be generalized to predefined items by querying the program whether inputs produced from the generalization step are still valid~(3).  The resulting grammar~(4) generalizes the original grammar and can quickly produce millions of tests for the program under test---and all other programs with the same input language.}
\label{fig:overview}
\end{figure*}

For each rule in the grammar, \EXAGRAM now attempts to \emph{generalize} it.  To this end, \EXAGRAM applies three rules:

\begin{description}
\item[Detecting optional fragments.]  For each fragment of a rule, \EXAGRAM determines whether it is optional---that is, whether removing it still results in a valid input.  This is decided by a \emph{membership query} with the program whose grammar is to be learned.

In our example (\autoref{fig:url-concrete-grammar}), \EXAGRAM can determine that the fragments \rterm{PATH}, \rterm{USERINFO} '@', 
\<':' \rterm{PORT}>, \<'?' \rterm{QUERY}>, and \<'\#' \rterm{REF}> are all optional, because \<http://www.google.com> is still a valid URL accepted by the \<java.net.URL> parser.
The \rterm{PROTOCOL} and \rterm{HOST} parts are not optional, though, because neither \<www.google.com> (no protocol) nor \<http://> (no host) are valid \URLs.

\item[Detecting repetitions.]  If \EXAGRAM detects that some item is repeated (i.e.\ it occurs at least twice in a rule), it also attempts mailto generalize the repetition by synthesizing alternate numbers of occurrences, and again extending the grammar accordingly if the input is valid.  If an item is found to be repeatable zero to five times, \EXAGRAM assumes an infinite number of repetitions.  In \autoref{fig:overview}, we show how \EXAGRAM generalizes possible repetitions given a \JSON sample.

\item[Generalizing grammar items.]  \EXAGRAM maintains a user-configurable list of \emph{commonly occurring grammar items,} encompassing identifiers, numbers, and strings.  For each string in a rule that matches a predefined grammar item, \EXAGRAM determines the most general item that still produces valid inputs, but where the next most general item is invalid.

Using this rule, \EXAGRAM can determine that \lterm{PORT} generalizes to any natural number (as a sequence of digits), as all of these would be accepted; however, generalizations such as negative numbers or floating point numbers would be rejected.  \lterm{HOST} generalizes to letters, numbers, and dots, but no other characters.  \lterm{PATH} generalizes to an arbitrary sequence of printable characters, as do \lterm{QUERY} and \lterm{REF}.  The \lterm{PROTOCOL} fragment, however, cannot be generalized to anything beyond \<'http'>.
\end{description}

As a result, \EXAGRAM produces the generalized grammar shown in \autoref{fig:url-general-grammar}.  This grammar identifies all optional parts (shown in square brackets), and generalizes all strings to the most general matching items (shown as underlined).  The resulting grammar effectively describes valid \<'http'> \URLs; adding one input sample each for additional protocols (\<'ftp'>, \<'mailto'>, \<'file'>, \dots) easily extends the set towards these protocols, too.

Being able to infer an accurate input grammar from a minimum of samples can be a great help in understanding programs and their input formats.  First and foremost, though, our approach opens the door for \emph{widespread, varied, and fully automatic robustness testing (``fuzzing'') of arbitrary programs that process serial input.}  In our case, given the inferred \URL grammar as a producer, we can easily generate millions of valid and varied \URL inputs, which can then be fed into any (now uninstrumented) \URL parser; other input and file formats would automatically be derived and tested in a similar fashion.
In one sentence, \emph{given only a bare minimum of samples, \EXAGRAM produces a grammar that easily allows the creation of millions and millions of valid inputs.}

\begin{figure}[t]\centering
\begin{grammar}
\lterm{SPEC} \expandsinto \rterm{STRING} [['?' [\rterm{QUERY}]] ['#' [\rterm{REF}]]]
\lterm{STRING} \expandsinto \rterm{PROTOCOL} '://' \rterm{AUTHORITY} [\rterm{PATH}]
\lterm{AUTHORITY} \expandsinto [\rterm{USERINFO} '@'] \rterm{HOST}
     [':' \rterm{PORT}]
\lterm{PROTOCOL} \expandsinto 'http'
\lterm{USERINFO} \expandsinto 'user:pass'
\lterm{HOST} \expandsinto \pterm{HOSTNAME}
\lterm{PORT} \expandsinto \pterm{DIGITS}
\lterm{PATH} \expandsinto \pterm{ABSOLUTEPATH}
\lterm{QUERY} \expandsinto \pterm{ALPHANUMWITHSPECIALS}
\lterm{REF} \expandsinto \pterm{ALPHANUMWITHSPECIALS}
\end{grammar}
\caption{\EXAGRAM grammar generalizing over \autoref{fig:url-concrete-grammar}.  Optional parts are enclosed in [\dots]. Predefined nonterminals (\autoref{fig:predefined}) are underlined.}
\label{fig:url-general-grammar}
\vspace{-\baselineskip}
\end{figure}

In the remainder of this paper, we detail the following contributions:
\begin{itemize}
\item After discussing the state of the art in inferring input grammars (\autoref{sec:background}), \autoref{sec:grammar-inference} contributes a formalization of how \EXAGRAM determines grammars from inputs, extending the informal description in the new idea paper of H\"oschele and Zeller~\cite{hoeschele2016}.
\item In \autoref{sec:generalizing}, we describe the generalization steps specific to \EXAGRAM, showing how \EXAGRAM can derive general grammars from a minimum of sample inputs.  No other technique can infer general input models from single samples alone.
\item \autoref{sec:evaluation} evaluates \EXAGRAM in terms of completeness and soundness of the inferred grammars; we show that the grammars produced are both complete and sound.
\end{itemize}
The paper closes with conclusion and future work in \autoref{sec:conclusion}. The \EXAGRAM prototype and all experimental data is available for research purposes upon request.

\begin{figure*}
\begin{tabular}{@{}>{\scriptsize}r@{\quad}>{\footnotesize\ttfamily\bfseries\color{DarkBlue}}ll>{\footnotesize}l@{}}
0 & http://user:pass@www.google.com:80/command?foo=bar\dots & \PASS & initial input from \autoref{fig:url-input} \\
1 & http://user:pass@www.google.com:\underline{10}/command?foo=bar\dots & \PASS & generalize \lterm{PORT} to \pterm{POSITIVEINTEGER} \\
2 & http://user:pass@www.google.com:\underline{01}/command?foo=bar\dots & \PASS & generalize \lterm{PORT} to \pterm{DIGITS} \\
4 & http://user:pass@www.google.com:\underline{Az1}/command?foo=bar\dots & \FAIL & can't generalize \lterm{PORT} to \pterm{ALPHANUMS} \\
5 & 
http://user:pass@\underline{sub.domain-0.top}:80/command?foo=bar\dots & \PASS & generalize \lterm{HOST} to \pterm{HOSTNAME} \\
6 & http://user:pass@\underline{Az1;:-+=!?*()/\#\$\%\&@}:80/command?foo=bar\dots & \FAIL & can't generalize to \pterm{ALPHANUMWITHSPECIALS} \\
7 &
http://\dots:80/command?\dots{}bar\#\underline{Az1;:-+=!?*()/\#\$\%\&} & \PASS & generalize \lterm{REF} to \pterm{ALPHANUMWITHSPECIALS} \\
8 &
http://\dots:80/command?\dots{}bar\#\underline{Az1;:-+=!?*()/\#\$\%\& \bs{}n\bs{}t\bs{}r} & \FAIL & can't generalize \lterm{REF} to \pterm{PRINTABLES} \\
9 &
http://\underline{Az1;:-+=!?*()/\#\$\%\&@}@www.google.com:80/command?\dots & \FAIL & can't generalize \lterm{USERINFO} to \pterm{PRINTABLES} \\
10 &
\underline{Az}://user:pass@www.google.com:80/command?foo=bar\dots & \FAIL & can't generalize \lterm{PROTOCOL} to \pterm{ALPHAS} \\
11 & 
http://\dots:80/command?\underline{Az1;:-+=!?*()/\#\$\%\&@}\#fragment & \PASS & generalize \lterm{QUERY} to \pterm{ALPHANUMWITHSPECIALS} \\
12 &
http://\dots:80/command?\underline{Az1;:-+=!?*()/\#\$\%\&@ \bs{}n\bs{}t\bs{}r}\#fragment & \FAIL & can't generalize \lterm{QUERY} to \pterm{PRINTABLES} \\
13 & http://\dots:80\underline{/some/path-to/file0123.ext}?foo=bar\dots & \PASS & generalize \lterm{PATH} to \pterm{ABSOLUTEPATH} \\
14 &
http://\dots:80\underline{some/path-to/file0123.ext}?foo=bar\dots & \FAIL & can't generalize \lterm{PATH} to \pterm{PATH}
\end{tabular}
\caption{Refining the grammar through membership queries.  For each rule, \EXAGRAM synthesizes input fragments (underlined) that may further generalize the rule.  By querying the program for whether a synthesized input is valid (\PASS) or not (\FAIL), \EXAGRAM can systematically generalize the concrete grammar from \autoref{fig:url-concrete-grammar} to the abstract grammar in \autoref{fig:url-general-grammar}.}
\label{fig:queries}
\end{figure*}

\section{Background}
\label{sec:background}

\EXAGRAM contributes to three fields: language induction, test generation, and specification mining.

\begin{description}
\item[Language Induction.] 
\EXAGRAM addresses the problem of \emph{language induction}, that is, finding an abstraction that best describes a set of inputs.  Traditionally, language induction was motivated from natural language processing, learning from (typically tagged) sets of concrete inputs; the recent book by Heinz et~al.~\cite{heinz2016} very well represents the state of the art in this field.  

Only recently have researchers turned towards learning the languages of \emph{program inputs.}  The first approach to infer context-free grammars from programs is \AUTOGRAM by H\"oschele and Zeller~\cite{hoeschele2016}; given a program and a set of sample inputs, \AUTOGRAM dynamically tracks the data flow of inputs through the program and aggregates inputs that share the same path into syntactical entities, resulting in well-structured, very readable grammars.  \EXAGRAM follows the \AUTOGRAM approach to infer grammars.

The \GLADE tool by Bastani et~al.~\cite{bastani2016} starts from a set of samples and than uses \emph{membership queries}---that is, querying the program under test whether synthesized inputs are part of the language or not---to derive a context-free grammar that is useful for later fuzzing.  The \emph{Learn\&Fuzz} approach by Godefroid et al.~\cite{godefroid2017} uses machine learning to infer context-free structures from program inputs for better fuzzing, and is shown to be applicable to formats as complex as \PDF.  Compared to \AUTOGRAM (and \EXAGRAM), neither \GLADE nor \emph{Learn\&Fuzz} need or make use of program structure.  This results in a simpler application, but also possibly less structured and less readable grammars; for the purpose of fuzzing, however, these deficits need not matter.

Whether they focus on natural language or program input, though, all of these approaches rely on the presence of \emph{a large set of sample inputs} to induce the language from; consequently, features and quality of the resulting grammars  depend on the variability of the input samples.  \EXAGRAM is unique in requiring only a minimal set of samples; instead, it leverages \emph{active learning} to systematically generalize the well-structured grammar induced by a single sample input already.

\item[Test Generation.]
Techniques for \emph{test generation} also have seen a rise in popularity in the last decade.  For small programs, and at the \emph{unit level,} a wide range of techniques focuses on establishing a wide \emph{variance} between generated runs, reaching branch coverage through \emph{symbolic constraint solving}~\cite{cadar2008,tillmann2008} or \emph{search-based approaches}~\cite{fraser2011}.  For the \emph{system level} of larger programs, the length of paths typically prohibits pure constraint-solving and search-based approaches.  To scale, test generators thus either need a \emph{model} of how the input is structured~\cite{godefroid2008grammar,holler2012fuzzing}, or again \emph{sample inputs} to mutate.  Given these, tools can again use search-based~\cite{afl} or symbolic approaches~\cite{godefroid2008sage} to achieve coverage.

\item[Specification Mining.]
The grammars inferred by \EXAGRAM can also be interpreted as specifications, notably as system-level \emph{preconditions.} By learning from executions, \EXAGRAM is similar in spirit to the \DAIKON approach by Ernst et al.~\cite{ernst2001}, which infers pre- and postconditions at the function level.  In contrast to \DAIKON, though, \EXAGRAM needs only a minimum of sample inputs, as it generalizes these automatically.  In the absence of function-level preconditions, such \emph{active learning} is only possible at the system level, as the program under test decides whether an input is valid or not.  Generally speaking, tools like \EXAGRAM can dramatically improve dynamic specification mining (including grammar mining), as they provide a large variety of inputs (and thus executions) for a large class of programs.
\end{description}

\section{Grammar Inference}
\label{sec:grammar-inference}

We start with a formal description of how \EXAGRAM infers grammars from sample inputs.  In short, \EXAGRAM tracks the path of each input character as well as values derived thereof throughout a program execution, annotating variables and functions with the \emph{character intervals} that flow into them (\autoref{sec:tainting}).  These intervals are then arranged to form hierarchical \emph{interval trees} (\autoref{sec:interval-trees}) which reflect the subsumption hierarchy in the grammar.  After resolving possible overlaps (\autoref{sec:overlap}), \EXAGRAM clusters those elements that are processed in the same way by a program (\autoref{sec:clustering}) to finally derive grammars (\autoref{sec:deriving-grammars}).

\subsection{Tainting}
\label{sec:tainting}

The first step in \EXAGRAM is to use \emph{dynamic tainting} to track the dataflow of input characters.

\subsubsection{Machine Model}
\andreas{Can we shorten the machine model to the first definition and proceed directly towards taints?}

\begin{definition}
        A \textit{program} is a mapping $m$ of fully-qualified method names to sequences $\vec{p}$ of program statements.
\end{definition}

\begin{definition}
        A \textit{program state} is a tuple $(f, s)$, where $f$ is a list of tuples $f_i$ of fully-qualified method names $f_i.n$ and a program counter $f_i.pc$. In further discussion, we will write $f_i = (f_i.n, f_i.pc)$, also we will use $f[0] = f^h$ to refer to the top-most tuple in $f$, $f[1]$ for the second tuple from the top and so on. $s$ is a function which maps program variables $v \in V$ to values.
\end{definition}

The JVM is a stack machine, which means e.g. the instruction \verb|iadd| is defined to pop two integers of the stack\footnote{This is the value stack, not the function stack $f$}, add them and push the result to the stack. For our presentation, we will assume that the JVM is a register machine, so our version of \verb|iadd|, written as \reg{p} = \verb|iadd| \reg{a} \reg{b} adds the values in registers \reg{a} and \reg{b} and stores the result in \reg{p}. This does not hurt correctness of our model, because the JVM byte code can be translated to code for a register machine (e.g. \cite{davis2003case}), but it does make the formalization much more readable. Also the formalization can be applied to other instruction sets (e.g. LLVM-IR) without significant changes.

Thereby the set of program variables $V$ contains heap locations, class variables and object variables, as well as local variables and registers. Registers correspond to elements on the stack in executions on the JVM.

If a program $m$ is executed in a program state $(f_0, s_0)$, the input state, this leads to a state $(f_n, s_n)$, the output state. We will denote \textit{program execution} as $(f_0, s_0) \xrightarrow{m} (f_n, s_n)$. A statement consists of an \textit{Opcode}, as defined in the Java Byte Code Specification and parameters, which are part of the program itself.

$\xrightarrow{m}$ is defined in terms of a helper function $\xrightarrow{p}$, which executes a single program statement. Again, $(f_0,s_0)\xrightarrow{p} (f_1,s_1)$ means that execution of the program statement $p$ in program state $(f_0,s_0)$ leads to a program state $(f_1, s_1)$.  With this helper, $\xrightarrow{m}$ is defined as

\begin{align*}
        (f_0,s_0) \xrightarrow {m} (f_n, s_n) \equiv \\
        (f_0, s_0) \xrightarrow{m(f_0^h.n)[f_0^h.pc]} (f_1, s_1) \\
        \xrightarrow{m(f_1^h.n)[f_1^h.pc]} \ldots \xrightarrow{m(f_{n-1}^h.n)[f_{n-1}^h.pc]} (f_n, s_n)
\end{align*}

This definition of $\xrightarrow{m}$ is not total, as the sequence of applications of $\xrightarrow{p}$ may never end, i.e. the program may not terminate.

In further discussion, we will use $\vec{p}$ to refer to the sequence of program statements that is used in a program execution. We write $p_1 < p_2$ for statements $p_1$ and $p_2$ iff $p_1$ occurs before $p_2$ in an execution. Consequently, statements from different executions can not be compared.

Also, we will use $V^p$ for the set of variables $v$ such that $s_0(v) \neq s_1(v)$ in $(f_0, s_0) \xrightarrow{p} (f_1, s_1)$.

We are not going to report the definition of $\xrightarrow{p}$ for all opcodes in the Java Byte Code, as most of them are straight forward. \autoref{rules} provides some examples.

For \textit{branching instructions}, we change the value of $f.pc$, such that the next application of $\xrightarrow{p}$ executes the instruction after the true- or false-side of the branch respectively. \textit{Method invocations} push a new $f^h$ to the list $f$, such that the lookup $m(f_i^h)$ in the next step returns the newly invoked method. \textit{Method return statements} pop the topmost value from the list $f$.

\subsubsection{Taint Propagation}
\label{sec:propagation}

During execution, the program reads from files. We refer to an input byte sequence, the content of an input file, as $\vec{s}$. If there is more than one input file, each file gets an id $j$, we refer to the input from this file as $_j \vec{s} = _j \vec{s}_0, \ldots, _j \vec{s}_m$. In our implementation, we use the file names of the input files rather than the id. 

\begin{definition}
	For each $_j\vec{s}_k$, the tuple $(j, k)$ is a \textit{taint tag} $t \in T$. 	A function $\tau$ which maps variables $v$ to taint tags is a \textit{taint tag mapping}. 
	
	Taint tag members compare by the byte index. That is, $(j, k) \in T < (g, h) \in T$ iff $j == g$ (they are from the same input source) and $k < h$ (the smaller tag is to the left of the larger one). 
	
	In the following taint tags usually appear as sets $t \subseteq T$. The distinction between individual tags and sets of tags is merely a technical one, so we will refer to those sets as taint tags as well. 
	
	A taint tag $t \subseteq T$ is \textit{consecutive with respect to $j$} iff for all $(j, i) < (j, h) \in t$, all $(j, k) \in T$ such that $i < k < h$ it is $(j, k) \in t$. 
\end{definition}

For our grammar mining, there is just one relevant input file. We will then assume that all taint tags refer to this input file and skip mentioning the id of this file. 

\begin{table*}
	\begin{tabular}{p{.5\textwidth}p{.5\textwidth}}
	\opcode{\reg{p} = \text{iload variable}}
	{ s_1(v) = \begin{cases} s_0(\text{variable}) & \text{if } v == \reg{p} \\ s_0(v) & \text{else} \end{cases} }
	{ \tau_1(v) = \begin{cases} \tau_0(\text{variable}) & \text{if } v == \reg{p} \\ \tau_0(v) & \text{else} \end{cases} }
	{ f_1[y] = \begin{cases}
					(f_0[y].n, f_0[y].pc + 1) & \text{if } y == 0 \\
					f_0[y] & \text{else} \end{cases} }
	\opcode{ \reg{p} = \text{ iadd } \reg{a} \reg{b} }
	{ s_1(v) = \begin{cases} s_0(\reg{a}) + s_0(\reg{b}) & \text{if } v == \reg{p} \\ s_0(v) & \text{else} \end{cases} }
	{ \tau_1(v) = \begin{cases} \tau_0(\reg{p}) & \text{if } v == \reg{p} \\ \tau_0(v) & \text{else} \end{cases} }
	{ f_1[y] = \begin{cases}
			(f_0[y].n, f_0[y].pc + 1) & \text{if } y == 0 \\
			f_0[y] & \text{else} \\
	\end{cases} }
	\opcode{ \text{invokevirtual \$method} \allowbreak \reg{callee} \reg{arg}_0 \ldots \reg{arg}_n }
	{ s_1(v) = \begin{cases} s_0(\reg{callee}) & \parbox[t]{.2\textwidth}{if $v$ is \texttt{this} in the newly called function.} \\ 
			s_0(\reg{arg}_i) & \parbox[t]{.2\textwidth}{ if $v$ is the $i$-th argument in the newly called function.} \\ 
			s_0(v) & \text{else} \\ 
	\end{cases} }
	{ \tau_1(v) = \begin{cases} \tau_0(\reg{callee}) & \parbox[t]{.2\textwidth}{if $v$ is \texttt{this} in the newly called function.} \\ 
			\tau_0(\reg{arg}_i) & \parbox[t]{.2\textwidth}{if $v$ is the $i$-th argument in the newly called function.} \\ 
			\tau_0(v) & \text{else}
	\end{cases} }
	{ f_1[y] = \begin{cases} (\text{\$method}, 0) & \text{if } y == 0 \\ 
			(f_0[0].n, f_0[0].pc + 1) & \text{if } y == 1 \\ 
			f_0[y-1] & \text{else}  \end{cases} }
	\opcode { \text{ireturn} \reg{p} }
	{ s_1(v) = \begin{cases} s_0(\reg{p}) & \parbox[t]{.2\textwidth}{if $v$ is the topmost stack frame.}  \\ 
			s_0(v) & \text{else} 
	\end{cases} }
	{ \tau_1(v) = \begin{cases} \tau_0(\reg{p}) & \parbox[t]{.2\textwidth}{if $v$ is the topmost stack frame.}  \\ 
			\tau_0(v) & \text{else}
	\end{cases} }
	{ f_1[y] = f_0[y+1] }

	\opcode { \text{ifeq} \reg{p} \text{ \$target} }
	{s_1 = s_0}
	{\tau_1 = \tau_0}
	{f_1[y] = \begin{cases}
			(f_0[y].n, \text{\$target}) & \text{if } y == 0 \text{ and } s_0[\reg{p}] == 0 \\
			(f_0[y].n, f_0[y].pc + 1) & \text{if } y == 0 \text{ and } s_0[\reg{p}] != 0 \\
			f_0[y] & \text{else} \end{cases}}
\end{tabular}
	
\alex{Basically, all other instructions are more or less equivalent to one of those. Maybe we could add putfield or getfield. Maybe array handling, but this is mostly interesting due to the memory. And memory is kind of hand-wavy anyways. }

\caption{Definitions of the most common JVM instructions.}
\label{rules}
\end{table*}

For tainting, program states are extended with a taint tag mapping, and the program execution function $\xrightarrow{p}$ is extended to also update the taint tag mapping. \autoref{rules} gives the semantics for the most common JVM instructions. In all cases, $\tau$ is defined to reflect the updates to $s$. 

For the sake of readability, we will use $\tau(V) = \bigcup_{v \in V} \tau(v)$ for the taint tags of all variables in a set $V$. For a sequence $\vec{p}$ of statements that were executed in a program run, We will use $\tau(p) = \tau(V^p)$, and $\tau(\vec{p}) = \bigcup_{p \in \vec{p}} \tau(p)$. 

\subsection{Interval Trees}
\label{sec:interval-trees}

\begin{figure}[t]\centering
\includegraphics*[width=\linewidth]{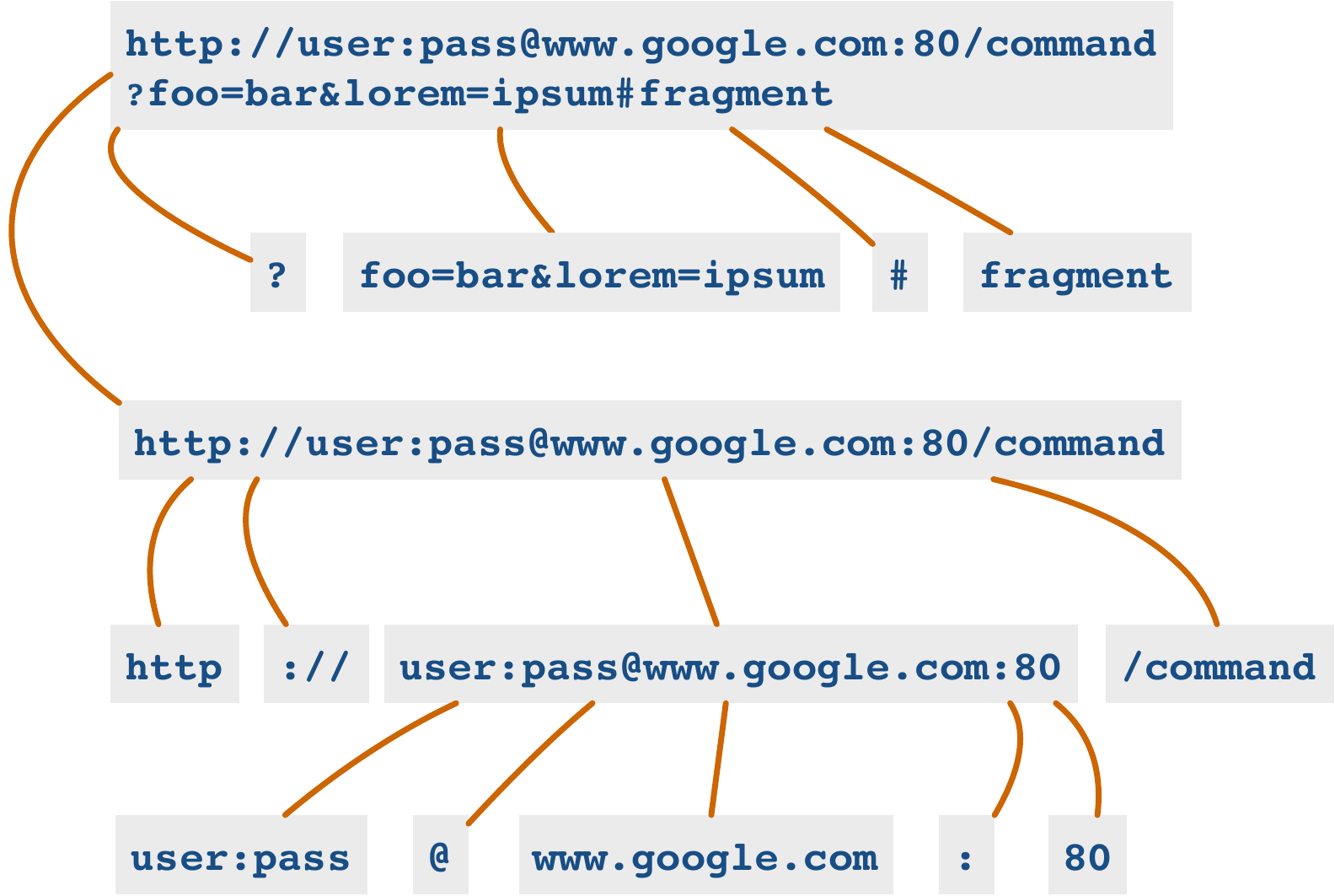}
\caption{Interval tree for the \URL example in \autoref{fig:url-input}}
\label{fig:interval-tree}
\end{figure}

After obtaining the taints for all variables during an execution, the next step in \EXAGRAM is to create a structural representation of the provided sample inputs.  To this end, we create \emph{tree representations} that are approximations of the parse trees which we will later use to infer a grammar.  An example of such an \emph{interval tree} is shown in \autoref{fig:interval-tree}, representing the decomposition of the \URL in \autoref{fig:url-input} by \<java.net.URL>.  We now show how to obtain such an interval tree from the program execution.

Let $\vec{p}$ be the sequence of statements that was executed in a program run. Let $\vec{p}_m = \{p_u, \ldots, p_w\} \subseteq \vec{p}$ be the subsequence of $\vec{p}$ such that $p_u$ is the first statement executed in a method~$m$ and $p_w$ is the return of the same invocation of~$m$. 

Due to the semantics of Java, for any two method invocations $m_1$ and $m_2$, it is either $\vec{p}_{m_1} \subset \vec{p}_{m_2}$, $\vec{p}_{m_2} \subset \vec{p}_{m_1}$ or $\vec{p}_{m_1} \cap \vec{p}_{m_2} = \emptyset$. That is because if a method~$m$ calls another method~$m'$, then $m'$~has to return before~$m$ can return.

We now associate each method with the \emph{characters it processes.}  First, we extend the definition of taint tag mappings from variables to statements and methods:
\begin{definition}
Let $V$ be the set of variables during the execution of a statement~$\vec{p}$, then $\tau(\vec{p}) = \bigcup_{v \in V}\tau(v)$.
\end{definition}
\begin{definition}
Let $P \subseteq \vec{p}$ be the set of statements executed in method~$m$, then $\tau(m) = \bigcup_{p \in P}\tau(p)$.
\end{definition}

Next, we assign each method a \emph{consecutive interval}~$I_m$ between the first and the last character processed in this method:
\begin{definition}
\label{def:interval}
Let $I'_m = \tau(\vec{p}_m)$ be the part of the input that has been processed within the method invocation~$m$.  Then, the \emph{method input interval} of~$m$ is
$$
		I_m = [\min(I'_m), \max(I'_m)]
$$
\end{definition}

\begin{lemma}
In Definition~\autoref{def:interval} $I'_m$ is consecutive $\Leftrightarrow$ $I_m = I'_m$. 
\end{lemma}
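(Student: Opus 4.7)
The plan is to prove both directions by directly unfolding the definitions of \emph{consecutive} and of the method input interval. The whole statement is essentially a sanity check that the closed interval $I_m = [\min(I'_m), \max(I'_m)]$ contains nothing more than the taint tags already in $I'_m$ precisely when $I'_m$ has no ``gaps''. Throughout, I will work inside a fixed input source (i.e.\ a fixed $j$), since the only relevant order on taint tags is the byte-index order within one file, and the method input interval is formed from tags tainted by the same input.

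For the direction $I'_m \text{ consecutive} \Rightarrow I_m = I'_m$, I would first handle the trivial case in which $I'_m = \emptyset$ (the method processed no input bytes), where both sides coincide by convention. Assuming $I'_m$ is nonempty, let $a = \min(I'_m)$ and $b = \max(I'_m)$, so both $a, b \in I'_m$. The inclusion $I'_m \subseteq I_m$ is immediate from the definition of $\min$ and $\max$. For the reverse inclusion, take any $(j,k) \in I_m = [a,b]$; if $k = a$ or $k = b$ we are done, and otherwise $a < (j,k) < b$, so the consecutivity hypothesis applied to $a, b \in I'_m$ forces $(j,k) \in I'_m$. Hence $I'_m = I_m$.

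For the direction $I_m = I'_m \Rightarrow I'_m \text{ consecutive}$, assume $I'_m = [a, b]$ in the sense of Definition~\ref{def:interval}. Then for any $(j,i), (j,h) \in I'_m$ with $(j,i) < (j,h)$ and any $(j,k) \in T$ with $i < k < h$, we have $a \le i < k < h \le b$, so $(j,k) \in [a,b] = I'_m$. This is precisely the consecutivity condition.

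The only step requiring any care is the forward direction, where one must remember that consecutivity is formulated relative to \emph{some} pair of elements in $I'_m$, not just the extremes; applying the definition to $a = \min I'_m$ and $b = \max I'_m$ is what turns a local gap-freeness condition into the global equality $I'_m = [a,b]$. I do not expect any genuine obstacle beyond being explicit about the empty case and about working within a single input file so that $\min$ and $\max$ are well defined under the taint tag order.
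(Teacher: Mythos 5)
Your proof is correct and follows the same route as the paper, which simply asserts both directions hold ``by construction''; you merely spell out the definitional unfolding (min/max inclusion plus applying consecutivity to the extremal elements) that the paper leaves implicit. Your explicit treatment of the empty case and of the single-file ordering is more careful than the paper's one-line argument, but it is the same proof.
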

\begin{proof}
If $I'_m$ is consecutive, then $I_m = I'_m$ holds by construction.  If $I_m = I'_m$, then $I_m'$ is consecutive because $I_m$ is consecutive by construction.
\end{proof}

\begin{definition}
Let a block $b$ be a parameter or return of a method, a load or store of a field or array, or a sequence of method invocations $b = m_1, \dots, m_l$ such that $\forall m_k, I = I'_{m_k}$, $m_{k-1}$ is the caller of $m_k$ and there is no method invocation $m'$ with $I = I'_{m'}$ such that $m'$ is the caller of $m_1$ or the callee of $m_l$. We extend the definition of $\tau$ to blocks such that $\tau(b) = \bigcup_{p \in b}\tau(p)$. We only consider blocks $b$ such that $\tau(b)$ is consecutive.
\end{definition}

\begin{definition}
Two blocks $b$ and $b'$ are similar if they are both the same parameter, return, a load or store of a field or array in and off the same method. If both blocks are sequences of method calls $b=b_1, \dots, b_l$ and $b'=b'_1, \dots, b'_l$ they are similar if for all $1 \leq i \leq l$, $b_i$ and $b'_i$ are calls of the same method.
\end{definition}

As method calls in a program run form a tree, those intervals can be arranged in a so-called \emph{interval tree.}  

\begin{definition}
For a program execution $\vec{p}$, the \emph{interval tree} of $\vec{p}$ consists of nodes $N_I for$ all intervals $I_b$ of all blocks~$b \in \vec{p}$.  If method~$m_1$ was called by~$m_2$, the interval $I_{m_1}$ is a child of $I_{m_2}$ in the interval tree or $I_{m_1} = I_{m_2}$. We associate each node $N_I$ with the set of blocks $B(N_I)$ that contains all blocks $b$ such that $I_b = I$. 
\end{definition}

In the interval tree in \autoref{fig:interval-tree}, the topmost node has a block containing the constructor \<URL())>.  This constructor called the method \<String.substring()> with \<\small"http://user:pass@www.google.com:80/command\char34> and also called \<URL.isValidProtocol()> with \<"http\char34>.

\begin{definition}
For an interval tree $T$, let $T(m)$ be all intervals that were derived from calls to a method $m$.
\end{definition}

\begin{lemma}
If an interval $I_{m_1}$ is a child of $I_{m_2}$ (that is, $m_1$ was called by $m_2$), then $I_{m_1} \subseteq I_{m_2}$ holds.
\end{lemma}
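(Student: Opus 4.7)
The plan is to reduce the claim to a straightforward monotonicity argument on sets and their min/max closures. The key fact I will lean on is the observation made earlier in the section: for any two method invocations $m_1$ and $m_2$, either their statement sequences are nested or disjoint, because Java's call discipline forces a callee to return before its caller does. So if $m_1$ is called by $m_2$, we get the set inclusion $\vec{p}_{m_1} \subseteq \vec{p}_{m_2}$ for free.

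From that inclusion, taint tag monotonicity follows immediately from the definition $\tau(\vec{p}) = \bigcup_{v \in V}\tau(v)$ applied on the statement level: since every statement of $m_1$ is also a statement of $m_2$, we have $I'_{m_1} = \tau(\vec{p}_{m_1}) \subseteq \tau(\vec{p}_{m_2}) = I'_{m_2}$. In particular $\min(I'_{m_2}) \leq \min(I'_{m_1})$ and $\max(I'_{m_1}) \leq \max(I'_{m_2})$, so
\[
I_{m_1} = [\min(I'_{m_1}),\max(I'_{m_1})] \subseteq [\min(I'_{m_2}),\max(I'_{m_2})] = I_{m_2},
\]
which is exactly the claim.

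Honestly, I do not expect a real obstacle here: the lemma is essentially a bookkeeping statement about intervals and their convex closures. The only point that deserves a sentence of justification is the step $\vec{p}_{m_1} \subseteq \vec{p}_{m_2}$, which is not a definition but a consequence of the nesting property of Java method invocations; I would cite the paragraph that already argues this (the ``either $\vec{p}_{m_1} \subset \vec{p}_{m_2}$, $\vec{p}_{m_2} \subset \vec{p}_{m_1}$ or disjoint'' observation) and note that the hypothesis ``$m_1$ called by $m_2$'' rules out the disjoint and reverse-inclusion cases. Everything else is the trivial fact that taking the interval hull $[\min(\cdot),\max(\cdot)]$ is monotone under set inclusion.
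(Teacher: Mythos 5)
Your proof is correct and follows essentially the same route as the paper's: both arguments rest on the nesting $\vec{p}_{m_1} \subseteq \vec{p}_{m_2}$ from Java's call discipline and the monotonicity of $\tau$ under union over a larger statement set. If anything, your version is slightly more careful than the paper's, which picks an arbitrary $i \in I_{m_1}$ and asserts a witnessing statement $p_i$ with $i \in \tau(p_i)$ --- a witness that is only guaranteed for $i \in I'_{m_1}$, not for points of the interval hull $I_{m_1}$ filling gaps in $I'_{m_1}$; your argument via the endpoints $\min$ and $\max$ avoids that slip.
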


\begin{proof}
Let $i \in I_{m_1}$. Thus, there is a program statement $p_i$ such that $i \in \tau(p_i)$ and $p_i \in \vec{p}_{m_1}$. $m_1$ was called by $m_2$, thereby $\vec{p}_{m_1} \subset \vec{p}_{m_2}$, but then $p_i \in \vec{p}_{m_2}$ and thereby $i \in \tau(\vec{p}_{m_2}) = I'_{m_2} \subseteq I_{m_2}$.
\end{proof}

Input characters can be processed at multiple program locations, so the converse does not hold.

For building the interval trees that are the input of our grammar learning heuristics we only consider method invocations and other program events like field accesses such that $I'_m$ is consecutive. Let $U$ be the set of all occurring consecutive intervals. We can build a tree by creating nodes $N_I$ for each $I \in U$ with children $N_{I_1} , \dots , N_{I_n}$ such that $\forall I_k \nexists J \subset U, I_k \subset J \subset I$ and $I_k \subset I$.


\subsection{Resolving Overlap}
\label{sec:overlap}


In recursive descent parsers, we observe quite often that intervals which are siblings in the interval tree overlap. This is caused by \textit{lookahead}. As recursive descent parsers usually read input from left to right, this is the most common type of overlap. This leads to a definition of \emph{overlap-free} interval trees. 

\begin{definition}
An interval tree $T$ is \textit{overlap-free} iff for all $I,J \in T$, either $I \cap J = \emptyset$ or $I \subset J \vee J \subset I$. 
\end{definition}

In an overlap-free interval tree, an interval overlaps only with its children, and children are always contained in their parents entirely. 

There is a simple algorithm to derive a overlap-free interval tree from an interval tree. We utilize that recursive descent parsers read their input from left to right, so any overlap occurs between children of the same parent, and the overlap corresponds to lookahead. Thereby, we can resolve overlap in an interval tree as follows: 

First of all, we derive an order on intervals. An interval $I = [r_I, t_I]$ is smaller than an interval $J = [r_J, t_J]$ if $r_I < r_J$ or $r_I = R_J$ and $t_I < t_J$.

For any node $N \in T$ with children $N_I$ and $N_J$ such that $I = [r_I, t_I]$, $J = [r_J, t_J]$,  $r_J < t_I$ and $I < J$, we derive a replacement interval $I' = [r_I, r_J-1]$. We recursively remove all children $N_C$ with $C = [r_C, t_C]$ if $r_C \not\in I'$, or replace them with nodes $N_{C'}$ with $C' = [r_C, t_J-1]$ if $r_C \in I'$.

In cases where there exists blocks in $B(N_I)$ such that they occur after all blocks in $B(N_J)$ in the pre-order of the call tree, we instead derive a replacement node $N_{J'}$ with $J' = [t_I + 1, t_J]$. We recursively remove all children $N_C$ with $C = [r_C, t_C]$ if $t_C \not\in J'$, or replace them with nodes $N_{C'}$ with $C' = [t_I + 1, t_C]$ if $t_C \in J'$. The intuition behind this is to deal with parser implementations that remember the last character and therefore have show patterns that could be interpreted as a sort of lookback similar to lookahead. This allows us to identify input fragments that are later used in the program and avoid splitting them up during the overlap resolving stage.

As another observation, if $\vec{i}$ is valid, $P$ needs to read the entire input. That is because in an $LL(k)$ language, a word with a valid prefix can always be invalidated by an invalid suffix. So $P$ can only accept after it read all of the input. Thereby, for all valid inputs the root of the interval tree is the entire input.

\subsubsection{Fixing Alignment of Last Leaf}

Due to the way interval trees are constructed the last input fragment is likely to be missaligned especially if it is a single character. For inner fragments the resolving of overlap uses the call tree to remove ambiguities and determine where a node should belong. We also use the dynamic call graph to check if we might propagate the leaf $L = [r_L, t_l]$ corresponding to the last input fragment to be a child to a node closer to the root. For an ancestor $I = [r_I, t_I]$ with $r_I < r_L$  and $t_L = t_I$, we check if there is a block $b_L$ of method calls in $B(L)$ such that a block $b_I$ in $B(I)$ contains the caller of $b_L$. In this case we can recursively remove $L$ from all children of $I$ and propagate it to be a direct child of $I$.

\subsection{Clustering Interval Nodes}
\label{sec:clustering}

After the construction of a set of overlap-free interval trees $O$ we are now at the stage to identify syntactic elements. The intuition for this stage is that syntactic elements of the same type or more precisely derivations of the same non-terminal symbols are processed in the same way by a program. This means the corresponding characters will be processed by the same functions and will be stored in the same fields and variables. We will therefore apply a simple clustering to the set $\mathit{Nodes}(O)$ of all nodes of all trees $T \in O$, that groups together nodes with similar labels.

\begin{definition}
A cluster is defined as a pair $C = (N_r, S)$ of a set of interval nodes $S \subset \mathit{Nodes}(O)$ and a representative node $N_r \in S$. All clusters form a partition $P$ of $O$ such that $\forall S_i,S_j \in P, S_i \cap S_j = \emptyset$. Let $\textit{Cluster}(N) = C$ be the cluster such that $N \in S$.
\end{definition}

We implemented this as a greedy algorithm that starts with an initially empty set $P^*$ of clusters.
Our heuristic sequentially processes all nodes $N \in \mathit{Nodes}(O)$ such that it tries to find a cluster $C' = (N'_r, S') \in P^*$ such that $N$ is \textit{similar} to $N'_r$. If a cluster $C'$ could be found, $N$ is added to $S'$ such that $C' = (N'_r, S' \cup \{N\})$. Otherwise we add a new cluster $C_\mathit{new} = (N, \{N\})$ to $P^*$. The \textit{relative similarity} $\mathit{Sim}_{\mathit{limit}}(N_i, N_j)$ of two nodes $N_i$ and $N_j$ is computed by checking for how many of the first $\mathit{limit}$ blocks according to pre-order in the dynamic call tree in  $N_i$ for which we can find similar blocks in $N_j$. We call this number $\mathit{match}_{\mathit{limit}}(N_i, N_j)$ and compute the same with the roles of $N_i$ and $N_j$ reversed as $\mathit{match}_{\mathit{limit}}(N_j, N_i)$. Let $\mathit{blocks}(N)$ be the number of blocks in a node $N$. 

\begin{definition}
The relative similarity is computed as
\begin{multline}
\mathit{Sim}_{\mathit{limit}}(N_i, N_j) = \\ \biggl(\frac{\mathit{match}_{\mathit{limit}}(N_i, N_j)}{\min(\mathit{blocks}(N_i), \mathit{limit})} + \frac{\mathit{match}_{\mathit{limit}}(N_j, N_i)}{\min(\mathit{blocks}(N_j), \mathit{limit})}\biggr) / 2
\end{multline}
\end{definition}

In our experiments we applied this heuristic with $\mathit{limit} = 15$ which showed to be a reasonable number for our subjects. For nodes $N_i$ and $N_j$ a relative similarity value $\mathit{Sim}_{\mathit{limit}}(N_i, N_j) \leq 0.9$ is considered to be sufficient in our experiments in order to add them to the same cluster and therefore being called \textit{similar}.
In future work it might be necessary to define alternative heuristics for similarity, especially if the technique might be applied to parsing code that frequently uses backtracking for which a fixed limited number of blocks might not be sufficient to determine the similarity of nodes since large amounts of blocks might correspond to invocations that have been discarded.

\subsection{Deriving Grammars}
\label{sec:deriving-grammars}

After clustering the nodes of all interval trees as described in \autoref{sec:clustering} the clusters can be used by our heuristics to derive a context-free grammar with non-terminal symbols and the corresponding production rules.

\subsubsection{Complex and Single Character Clusters}

The first observation after the clustering stage is that input fragments consisting of single characters usually end up in clusters with other single characters. They are also usually not similar to nodes that are supposed to belong to the same grammatical categories. The reason for this is that due to lookahead a parser will treat these single character fragments differently than the complex ones. In addition to parsing them as a numeric character, parsers will frequently access them to determine what parsing function should be called next or for common tasks like trying to skip whitespace. This results in additional blocks in the corresponding nodes that have no similar blocks in nodes corresponding to longer fragments.

Therefore we identify a set of complex clusters $\mathit{Complex}(P)$ and a set of single-character clusters $\mathit{SingleChar}(P)$ that only contain nodes $N = [r_N, t_N]$ with $r_N = t_N$ such that $\mathit{Complex}(P) \cup \mathit{SingleChar}(P) = P$. The next steps will only use clusters in $\mathit{Complex}(P)$ to identify non-terminal symbols and derive productions and the knowledge from clusters in $\mathit{SingleChar}(P)$ is integrated into the derived grammar in a post processing step.


\subsubsection{Identifying Single Non-Terminal Substitution}
\label{sec:grammar:sns}

Input languages frequently represent entities from the input domain of a program in a way that closely corresponds to the way these entities are represented in the code. Such entities are usually modeled as structs or classes by programmers and the relationship between those entities will modeled as type relationships. These types therefore are closely related to non-terminal symbols in a formal grammar for the input language. When we look at \JSON as an example, we can see such a correspondence for the symbol \<\textbf{\lterm{VALUE}}>:

\begin{grammar}
\lterm{VALUE} \expandsinto \rterm{OBJECT} | \rterm{ARRAY} | \rterm{STRING} |
          \rterm{TRUE} | \rterm{FALSE} | \rterm{NULL} | \rterm{NUMBER}
\end{grammar}

The symbol \<\textbf{\lterm{VALUE}}> can be substituted for 7 other non-terminal symbols which correspond to subclasses of an abstract class \<JsonValue>. Since our \JSON library implements recursive descent, these values are read by the method \<readValue()> that depending on the next character decides which specialized parsing method like for example \<readArray()> it needs to call. This also means that the first block $b$ in an interval tree node $n$ for such a input fragment will be a sequence of method calls $b = m_1, \dots, m_k$ such that $m_1$ is an invocation of \<readValue()> and $m_2$ is an invocation of one of the specialized parsing methods. According to the heuristics described in \autoref{sec:clustering} all nodes corresponding to input fragments representing arrays are put in the same cluster but they are not similar to nodes for objects or numbers. The fact that arrays and the other entities are all values is not explicitly visible in our structural decomposition at this point in the inference process.

We address this by searching for common prefixes in the first blocks of nodes. Let $C_1 = (N_{C_1}, S_{C_1})$ and $C_2 = (N_{C_2}, S_{C_2})$ be clusters, $b_{N_{C_2}}  = m_{1,1}, \dots, m_{1,k_1}$ and $b_{N_{C_2}}  = m_{2,1}, \dots, m_{2,k_2}$ are the first blocks in $N_{C_1}$ and $N_{C_2}$. If $b_{N_{C_1}}$ and $b_{N_{C_2}}$ have a common prefix of size $j$ we can split all nodes in $C_1$ and $C_2$. For a node $N$ in $C_1$ or $C_2$ let $b = m_1, \dots, m_k$  be the first block in $N_I$ with $I = [r_I, t_I]$. We split $b$ that consist of sequences of calls to the same methods and their postfixes $b_{\mathit{pre}}  = m_{1}, \dots, m_{j}$, $b_{\mathit{post}} = m_{j+1}, \dots, m_{k}$. Using these blocks we derive new nodes $N^{\mathit{pre}}_I$ and $N^{\mathit{post}}_I$ such that $B(N^{\mathit{pre}}_I) = \{b_{\mathit{pre}}\}$ and $B(N^{\mathit{post}}_I) = B(N_I) \setminus b \cup \{b_{\mathit{post}}\}$ with $N^{\mathit{post}}_I$ being the only child of $N^{\mathit{pre}}_I$. We replace $N_I$ in the interval tree and cluster with $N^{\mathit{pre}}_I$ and transfer all children of $N_I$ to $N^{\mathit{post}}_I$. We also create a new cluster $C_{pre}$ that will contain all nodes $N^{\mathit{pre}}_I$.

Using this transformation we can make the relationship between input fragments explicitly visible in form of intermediate node in the interval trees and a common cluster.

\subsubsection{Create Non-Terminal Symbols and Productions}

At this stage we derive the non-terminal symbols from the complex clusters. For each cluster $C \in \mathit{Complex}(P)$ we define a corresponding non-terminal symbol $\mathit{Sym}(C)$. We can derive a simple set of productions by identify all observed substitution sequences from the children of each $N_I$ with $I = [r_I, t_I] \in C$. Let $N'_1, \dots, N'_k$ be the children of $N_I$ ordered by the position of the corresponding input fragments, then $\mathit{Sym}(\mathit{Cluster}(N'_1)), \dots, \mathit{Sym}(\mathit{Cluster}(N'_k))$ is a possible substitution for $\mathit{Sym}(C)$. If $N$ is a leaf the character sequence corresponding to the interval $[r_I, t_I]$ is a possible substitution for $\mathit{Sym}(C)$. Child nodes $N'_j$ that correspond to single characters are instead represented by a corresponding terminal symbol instead of $\mathit{Sym}(\mathit{Cluster}(N'_j))$. Applying these definitions provides us with a preliminary set of production rules that precisely capture the structure of the observed sample inputs and do not include any generalization. The only generalization up to this point comes from the assumptions in \autoref{sec:clustering}.


\subsubsection{Post Processing}

\begin{description}
\item[Merge Symbols] The clustering in \autoref{sec:clustering} and the processing in \autoref{sec:grammar:sns} might still result in similar syntactic elements being part of different clusters. They are therefore represented by different non-terminal symbols. We address this with a tunable heuristic that merges compatible symbols. For clusters $C_i = (N_i, S_i)$ and $C_j = (N_j, S_j)$ let $b_i$ and $b_j$ be the first blocks in $N_i$ and $N_j$. We merge the clusters $C^{\mathit{merge}} = (N_i, S_i \cup S_j)$ if $b_i$ and $b_j$ match exactly or if they are similar enough according to a relaxed heuristic $\mathit{Sim}^{\mathit{prefix}}_{\mathit{limit}}(N_i, N_j)$ that computes a similarity score that allows for partial prefix matches of blocks. The new cluster also is assigned a new symbol $\mathit{Sym}(C^{\mathit{merge}})$ that is substituted for all occurrences of $\mathit{Sym}(C_i)$ and $\mathit{Sym}(C_j)$.

\item[Process Single Characters] Up to this stage we did not account for the possibility that single-character fragments can also be instances of non-terminal symbols. To address this we apply a heuristic $\mathit{Sim}^{\mathit{prefix}\rightarrow}_{\mathit{limit}}(N_j, N^{\mathit{char}})$ that for each cluster $C_1, \dots, C_k$ and a node $N^{\mathit{char}}$ that computes a similarity score that allows for partial prefix matches of blocks and is unidirectional only tries to find similar of blocks from $N_j$ in blocks of $n^{\mathit{char}}$. If the character occurs in a fragment represented by a node in $C_l$, $\mathit{Sim}^{\mathit{prefix}\rightarrow}_{\mathit{limit}}(N_l, N^{\mathit{char}})$ is greater than for all other $N_j$ and also exceeds a configurable threshold, we replace the terminal symbol in the corresponding production with $\mathit{Sym}(C_l)$.
\end{description}


\subsection{Naming Nonterminals}

In order to make it easier to for users to read the grammars learned by \EXAGRAM we try to propose meaningful names for non-terminal symbols. For each symbol $S=\mathit{Sym}(C)$ with a cluster $C = (n_r, S)$ we collect the names from elements of blocks $b \in B(N)$ with $N \in S$, e.g. names of methods which processed the corresponding fragments or parameters and fields in which a value derived from the fragment was stored. For each cluster $C$ we therefore get a multiset $\mathit{Names}(C)$ of strings that we use to propose a name. We implemented a simple heuristic that first filters the strings by removing common prefixes or suffixes like \texttt{get}, \texttt{set} and \texttt{parse} and then identifies the most often occurring substring that is as long as possible.

%
%
%
%

\section{Generalizing Grammars}
\label{sec:generalizing}

Since we aim to learn grammars from very small sample sets down to one single sample input our initially derived grammar will be a very close fit to these samples. In the following we describe several heuristics for generalizing these grammars.

\subsection{Optional Elements}

The first generalization is the identification of optional elements. For all sequences of symbols $S_1, \dots, S_n$ on the right-hand side of productions we try to identify optional subsequences of length $l = n-1$ down to $l = 1$. For each initial sample that contributed to the derivation of this sequence and each subsequence $S_k, \dots, S_{k+l}$ of length $l$ we derive new inputs by omitting the fragments corresponding to $S_k, \dots, S_{k+l}$. We execute the program with theses inputs and check if they are accepted by the program and the data-flow of the fragments following the omitted part has not changed. In that case we consider the subsequence optional and modify the grammar accordingly. In case of our URL example we start with the concrete grammar in \autoref{fig:url-concrete-grammar} and starting at \lterm{SPEC} derive new inputs by omitting decreasingly long subsequences. \autoref{fig:optionality-samples} shows the accepted and rejected inputs that lead to the generalizations in \autoref{fig:url-general-grammar}.

\begin{figure*}
\begin{tabular}{@{}>{\scriptsize}r@{\quad}>{\footnotesize\ttfamily\bfseries\color{DarkBlue}}ll>{\footnotesize}l@{}}
0 & http://user:pass@www.google.com:80/command?foo=bar\&lorem=ipsum\#fragment & \PASS & initial input from \autoref{fig:url-input} \\
1 & http://user:pass@www.google.com:80/command\textcolor{AlmostWhite}{?foo=bar\&lorem=ipsum\#fragment} & \PASS &  \\
3 & \textcolor{AlmostWhite}{http://user:pass@www.google.com:80/command?foo=bar\&lorem=ipsum\#}fragment & \FAIL &  \\
4 & http://user:pass@www.google.com:80/command?\textcolor{AlmostWhite}{foo=bar\&lorem=ipsum\#fragment} & \PASS &  \\
5 & http://user:pass@www.google.com:80/command\textcolor{AlmostWhite}{?foo=bar\&lorem=ipsum\#}fragment & \FAIL &  \\
6 & \textcolor{AlmostWhite}{http://user:pass@www.google.com:80/command?foo=bar\&lorem=ipsum}\#fragment & \FAIL &  \\
7 & http://user:pass@www.google.com:80/command?foo=bar\&lorem=ipsum\textcolor{AlmostWhite}{\#fragment} & \PASS &  \\
& \multicolumn{1}{c}{\vdots} & & \\
15 & \textcolor{AlmostWhite}{http://user:pass@www.google.com:80/command}?foo=bar\&lorem=ipsum\#fragment & \FAIL &  \\
\end{tabular}
\caption{Identifying optional elements through membership queries.  For each rule, \EXAGRAM synthesizes new inputs that omit the fragments corresponding to a subsequence of elements.  By querying the program for whether a synthesized input is valid (\PASS) or not (\FAIL), \EXAGRAM can systematically generalize the concrete grammar from \autoref{fig:url-concrete-grammar} to the abstract grammar in \autoref{fig:url-general-grammar}.}
\label{fig:optionality-samples}
\end{figure*}

\subsection{Generalizing Strings}

\begin{figure}[t]
\begin{grammar}\small
\pterm{DIGIT} ::= /[0-9]/
\pterm{DIGITS} ::= \pterm{DIGIT}+
\pterm{POSITIVEINTEGER} ::= /[1-9]/ [\pterm{DIGITS}]
\pterm{INTEGER} ::= ['-'] \pterm{POSITIVEINTEGER}
\pterm{ALPHA} ::= /[A-Z]/ | /[a-z]/
\pterm{ALPHAS} ::= \pterm{ALPHA}+
\pterm{ALPHANUM} ::= \pterm{ALPHA} | \pterm{DIGIT}
\pterm{ALPHANUMS} ::= \pterm{ALPHANUM}+
\pterm{WHITESPACE} ::= ' ' | '\bs{}t'
\pterm{WHITESPACES} ::= \pterm{WHITESPACE}+
\pterm{WHITESPACENEWLINE} ::= ' ' | '\bs{}t' | '\bs{}n' | '\bs{}r'
\pterm{WHITESPACENEWLINES} ::= \pterm{WHITESPACENEWLINE}+
\pterm{HOSTNAME} ::= (\pterm{ALPHANUM} | '.' | '-')+
\pterm{PATH} ::= (\pterm{ALPHANUM} | '.' | '-' | '/')+
\pterm{ALPHANUMWHITESPACE} ::= \pterm{ALPHA} | \pterm{DIGIT} |
     \pterm{WHITESPACE}
\pterm{ALPHANUMWHITESPACES} ::= \pterm{ALPHANUMWHITESPACE}+
\pterm{ABSOLUTEPATH} ::= '/' [\pterm{PATH}]
\pterm{ALPHANUMWITHSPECIAL} ::= \pterm{ALPHANUM} | ';' | ':' |
     '-' | '+' | '=' | '!' | '*' | '(' |
     ')' | '/' | '$' | '
\pterm{ALPHANUMWITHSPECIALS} ::= \pterm{ALPHANUMWITHSPECIAL}+
\pterm{ALPHANUMWITHSPECIALWHITESPACE} ::= 
    \pterm{ALPHANUM} | \pterm{WHITESPACE}
\pterm{ALPHANUMWITHSPECIALWHITESPACES} ::=
    \pterm{ALPHANUMWITHSPECIALWHITESPACE}+
\pterm{PRINTABLENEWLINE} ::= \pterm{ALPHANUMWITHSPECIAL} |
    \pterm{WHITESPACENEWLINE}
\pterm{PRINTABLENEWLINES} ::= \pterm{PRINTABLE}+
\pterm{PRINTABLE} ::= \pterm{ALPHANUMWITHSPECIAL} | 
     \pterm{WHITESPACE} | '?' | '#'
\pterm{PRINTABLES} ::= \pterm{PRINTABLE}+
\end{grammar}
\vspace{-0.5\baselineskip}
\caption{Predefined (user-configurable) nonterminals in \AUTOGRAM.}
\label{fig:predefined}
\end{figure}

Another important step is to generalize terminal strings in grammars. When we learn a grammar from an input like in \autoref{fig:url-input} we only observe one specific value for \lterm{HOST}, \lterm{PORT} and other symbols. We try to generalize the observed values to regular expressions. \autoref{fig:predefined} shows a set of non-terminal symbols that define regular languages. The inclusion relationship between these regular languages can be used to derive a directed graph $G_L$ such that for languages $L_1$ and $L_2$, there is a path from $L_1$ to $L_2$ if $L_1 \subset L_2$. For a set $S$ of strings our heuristic starts by identifying the smallest regular language $L$ such that $S \subset L$. In order to check if we can safely generalize $S$ to $L$, we derive new inputs from all occurrences of each $s \in S$ in the initial sample set which are replaced by a representative member of $L$. Similar to the heuristic for optionality, we run the program with these inputs and check if they are accepted and if the data-flow for the unmodified parts of the input remain the same. If this is successful for all derived inputs, we can generalize $S$ to $L$. Our heuristic traverses $G_l$ step by step until the derived inputs are not accepted or show modified data-flow.

For our URL example, we generalize individual non-terminal symbols as illustrated in \autoref{fig:queries}. For the concrete port number \texttt{80} we find the smallest regular language \lterm{POSITIVEINTEGER} that contains this fragment. To confirm that the program will also accept other elements of \lterm{POSITIVEINTEGER} we replace \texttt{80} with \texttt{10} and run the program to check if this derived input is accepted, which in this case is successful. The next possible step is to try and generalize to the next larger regular language which is \lterm{DIGITS}. We again test by replacing \texttt{80} with \texttt{01} which is again successful. However we are not able to generalize further to \lterm{ALPHANUMS} since replacing \texttt{80} with \texttt{Az1} is rejected by the program.

\subsection{Repetitions}

Many context free languages use repeating features like sequences. We implemented a simple heuristic to detect repeating expressions and try to generalize them using the same mechanism that is used for finding optional elements. For each occurrence of a repetition of an expression in the set of sample inputs we derive new inputs in which the repetition is replaced by $n$ occurrences of the expression. We currently do this for $n \in [1,5]$. We run the program with these inputs and check if they are accepted and if the data-flow for the unmodified parts of the input remain the same. If this is successful for all derived inputs we generalize the reputation to a non-empty sequence of the expression of an arbitrary length.

If the repeated expression is a sequence of a non-terminal symbol $S$, we check if all nodes of occurrences of the first or the last element in the initial sample set correspond to the application of a specific production rule while all others use a different production rule. If this is the case we specialize the first or last element and the other occurrences to these production rules. This helps to deal with common patterns like repeating lines, where all lines except the last one are terminated by a newline symbol.

\section{Evaluation}
\label{sec:evaluation}

Let us now demonstrate \EXAGRAM on a set of example subjects.  As test subjects, we use the programs listed in \autoref{tab:subjects}; this extends the set from \AUTOGRAM~\cite{hoeschele2016} with INI4J.

\begin{table*}[t]
\centering
\caption{Test Subjects and Evaluation Results}
\label{tab:subjects}
\begin{tabular}{lllrr}
\textbf{Subject} & \textbf{Data Format} & \textbf{Format Purpose} & \textbf{Soundness} & \textbf{Completeness} \\ \hline
\texttt{java.lang.URL} & \URL & Uniform Resource Locators; used as Web addresses & 100.0\% & 50.3\% \\
Apache Commons CSV\footnotemark[1] & \CSV & Comma-separated values; used in spreadsheets & 100.0\% & 100.0\% \\
\texttt{java.util.Properties} & \JAVA property files & Configuration files for \JAVA programs using key/value pairs & 100.0\% & 13.4\% \\
INI4J\footnotemark[2] & \INI  & Configuration files consisting of sections with key/value pairs & 100.0\% & 16.1\% \\
Minimal JSON\footnotemark[3] & \JSON & Human-readable text to transmit nested data objects & 100.0\% & 100.0\% \\
\end{tabular}
\end{table*}

\footnotetext[1]{https://commons.apache.org/proper/commons-csv/}
\footnotetext[2]{http://ini4j.sourceforge.net/}
\footnotetext[3]{https://github.com/ralfstx/minimal-json}

\subsection{Soundness and Completeness}

We evaluate each grammar by two measures:
\begin{description}
\item[Soundness.]  For each of the grammars produced by \EXAGRAM, we use them as \emph{producers} to derive 20,000 randomly generated strings.  These then serve as inputs for subjects they were derived from.  The \emph{soundness} measure is the percentage of inputs that are accepted as valid.  A 100\% soundness indicates that all inputs generated from the grammar are valid.

\item[Completeness.]  For each of the languages in \autoref{tab:subjects}, we create a \emph{golden grammar}~$g$ based on the language specification.  We test whether the respective \EXAGRAM-generated grammar accepts the 20,000 random strings generated by~$g$.  A 100\% completeness indicates that the grammar encompasses all inputs of the golden grammar.
\end{description}

When the grammars are used for test generation (fuzzing), a high \emph{soundness} translates into a high \emph{test efficiency,} as only few inputs would be rejected.  A high \emph{completeness} correlates with \emph{test effectiveness,} since the grammar would cover more features, and thus exercise more functionality.

\autoref{tab:subjects} summarizes our results for soundness and completeness across all subjects.  We now discuss these in detail.

\subsection{URLs}

In \autoref{fig:url-general-grammar},  we show the \EXAGRAM grammar obtained from the \<java.net.URL> class and the one sample in \autoref{fig:url-input}.  In our evaluation, the inferred \URL grammar is 100\% sound (all inputs derived from it are complete); however, it is only 50.3\% complete.  This is due to the rule for \lterm{USERINFO}, whose format \emph{user}:\emph{password} cannot be generalized using our predefined items; in our evaluation, this leads to every \URL containing a random user/password combination being rejected.  This, however, is a case that can be very easily fixed by either amending the produced grammar or introducing a predefined item for \emph{user}:\emph{password}.

As all grammars produced by \EXAGRAM, the grammar is very readable; this is due to \EXAGRAM naming nonterminals after associated variables and functions.  We have not formally evaluated readability, but we list the grammars as raw outputs such that readers can assess their understandability themselves.

\subsection{Comma-Separated Values}

\begin{figure}[t]
\begin{sampleandarrow}\footnotesize
Firstname;Lastname;Phone
Roger;Smith;34534534
"Anne";"Perkins Watson";"204059"
Leila;Jackson;9569784
Dough;Clinton;1298483
\end{sampleandarrow}
\vspace{-.5\baselineskip}
\begin{grammar}
\lterm{MAIN} \expandsinto \rterm{CSVPARSER} \rterm{NEXTRECORD}*
\lterm{CSVPARSER} \expandsinto \rterm{HASH} \rterm{NEXTTOKEN}*
\lterm{HASH} \expandsinto \pterm{ALPHANUMWHITESPACES}
\lterm{NEXTTOKEN} \expandsinto ';' [\rterm{KEY}] |
     ';' [\rterm{ENCAPSULATEDTOKEN}]
\lterm{ENCAPSULATEDTOKEN} \expandsinto '"' \rterm{KEY} '"'
\lterm{KEY} \expandsinto \pterm{ALPHANUMWHITESPACES}
\lterm{NEXTRECORD} \expandsinto \rterm{NEXTTOKEN_FIRST} \rterm{NEXTTOKEN}*
\lterm{NEXTTOKEN_FIRST} \expandsinto '\bs{}n' [\rterm{KEY}] |
     '\bs{}n' [\rterm{ENCAPSULATEDTOKEN}]
\end{grammar}
\caption{\CSV grammar generalized and derived from a single sample.}
\label{fig:csv-grammar}
\end{figure}

The Apache Commons CSV reader uses a pure lexical processing for its input, which is also reflected in the resulting grammar (\autoref{fig:csv-grammar}). However, \EXAGRAM nicely identifies that values are optional and that strings can contain arbitrary characters, features not present in the original sample.  The grammar is both 100\% sound and complete.

\subsection{Java Properties}

\begin{figure}[t]
\begin{sampleandarrow}\footnotesize
Version=1
WorkingDir=mydir
# comment 
User=Bob
Password=12345
\end{sampleandarrow}
\vspace{-.5\baselineskip}
\begin{grammar}
\lterm{MAIN} ::= \rterm{LOAD}
\lterm{LOAD} ::= \rterm{LINE}* \rterm{LINE_LAST}
\lterm{LINE} ::= [\rterm{S2}] '\bs{}n'
\lterm{S2} ::= [\rterm{S3}] \rterm{ARG} ' '* '=' [' '* \rterm{ARG}]
\lterm{S3} ::= '# comment \bs{}n'
\lterm{ARG} ::= \pterm{ALPHANUMWHITESPACES}
\lterm{LINE_LAST} ::= \rterm{S2}
\end{grammar}
\caption{Java properties grammar generalized and derived from a single sample.}
\label{fig:properties-grammar}
\end{figure}

The Java properties parser also is a simple scanner, resulting in nonterminals for which \EXAGRAM cannot find a name (\lterm{S2} and \lterm{S3}).  The \emph{key=value} structure is well identified, as are the optional values (\lterm{S2}).  The grammar is 100\% sound; the comment (\;term{S3}), however, cannot be generalized further by \EXAGRAM, resulting in a low completeness if 13.4\% in our evaluation, as 86.6\% of golden inputs would sport different comments.  Again, this is very easy to amend.

\subsection{INI Files}

\begin{figure}[t]
\begin{sampleandarrow}\footnotesize
[Application]
Version = 1
WorkingDir = mydir
[User]
User = Bob
Password = 12345
\end{sampleandarrow}
\vspace{-.5\baselineskip}
\begin{grammar}
\lterm{LOAD} ::= \rterm{LINE_FIRST} \rterm{LINE}* \rterm{LINE1}
\lterm{LINE_FIRST} ::= \rterm{LINE2_FIRST} '\bs{}n'
\lterm{LINE2_FIRST} ::= '[' \rterm{SECTION} ']'
\lterm{SECTION} ::= \pterm{ALPHANUMWHITESPACES}
\lterm{LINE} ::= \rterm{LINE2} '\bs{}n'
\lterm{LINE2} ::= '[' \rterm{SECTION} ']' | 
    \rterm{KEY} ' '* '=' [' '*  [\rterm{VALUE}]]
\lterm{KEY} ::= \pterm{ALPHANUMWHITESPACES}
\lterm{VALUE} ::= \pterm{ALPHANUMWHITESPACES}
\lterm{LINE1} ::= \rterm{KEY} ' '* '=' [' '* [\rterm{VALUE}]]
\end{grammar}
\caption{\INI grammar generalized and derived from a single sample.}
\label{fig:ini-grammar}
\end{figure}

From the INI4J parser and one input sample, \EXAGRAM derives the grammar in \autoref{fig:ini-grammar}.  As with Java properties, the \emph{key=value} structure is well identified (\lterm{LINE1} and \lterm{LINE2}), and the grammar is 100\% sound.  However, it is only 16.1\% complete, because the golden grammar also produces underscores for identifiers, which are not found in our sample.  Adding a sample identifier with an underscore would easily fix this.

\subsection{JSON Inputs}

\begin{figure}[t]
\begin{sampleandarrow}\footnotesize
\{
  "glossary": \{
    "title": "example glossary",
        "GlossSeeAlso": 
            ["GML", "XML"] ,
        "bool1"  : true,
        "bool2"  : false,
        "number1" : 2349872,
        "number2" : -45242,
        "number3" : 2349.872,
        "number4" : -98.72,
        "empty" : null,
        "number5" : 2372e71,
        "number6" : 123e-31,
        "number7" : 23.72e71,
        "number8" : 12.83e-33
  \}
\}
\end{sampleandarrow}
\vspace{-.5\baselineskip}
\begin{grammar}
\lterm{MAIN} \expandsinto \rterm{VALUE}
\lterm{VALUE} \expandsinto \rterm{STRING} | \rterm{FALSE} | \rterm{TRUE} |
    \rterm{OBJECT} | \rterm{ARRAY} | \rterm{NULL} | \rterm{NUMBER}
\lterm{STRING} \expandsinto '"' \rterm{HASH} '"'
\lterm{HASH} \expandsinto \pterm{ALPHANUMWITHSPECIALWHITESPACES}
\lterm{FALSE} \expandsinto 'false'
\lterm{TRUE} \expandsinto 'true'
\lterm{OBJECT} \expandsinto '\{' 
    [ \rterm{STRINGINTERNAL} ':' \rterm{VALUE}
     (',' \rterm{STRINGINTERNAL} ':' \rterm{VALUE})* ] '\}'
\lterm{STRINGINTERNAL} \expandsinto '"' \rterm{HASH} '"'
\lterm{ARRAY} \expandsinto '[' [ \rterm{VALUE} ( ',' \rterm{VALUE} )* ] ']'
\lterm{NULL} \expandsinto 'null'
\lterm{NUMBER} \expandsinto \rterm{INTEGER} [[\rterm{FRACTION}] [\rterm{EXPONENT}]]
\lterm{EXPONENT} \expandsinto 'e' ['-'] \pterm{POSITIVEINTEGER}
\lterm{FRACTION} \expandsinto '.' \pterm{POSITIVEINTEGER}
\end{grammar}
\caption{\JSON grammar generalized and derived from a single sample. Whitespace processing omitted.}
\label{fig:json-grammar}
\end{figure}

The most complex input language we have studied with \EXAGRAM is \JSON from the \emph{Minimal \JSON parser}.  Our input sample is set up to cover the \JSON data types, and these are all reflected in the resulting grammar (\autoref{fig:json-grammar}).  Via its membership queries, \EXAGRAM has generalized that objects and arrays can have an arbitrary number of members, and also generalized all number rules.  (Minimal \JSON has its own parser for numeric values.)  This grammar represents all valid \JSON inputs; in our experiments, it is 100\% sound and 100\% complete.

\subsection{Threats to Validity}

All results in the evaluation are subject to threats to validity.  In terms of \emph{external validity,} we do not claim that any of the results or measures shown would or even can generalize to general-purpose programs; rather, our results show the potential of grammar inference even in the absence of large sample sets.  In terms of \emph{internal validity,} it is clear that the completeness of samples easily determines the completeness of the resulting grammars; we have thus set up our samples with a minimum of features.

\section{Conclusion}
\label{sec:conclusion}

With \EXAGRAM, it is possible to derive a complete input grammar given a program and only a minimum of input samples, tracking the flow of input characters and derived values through the program to derive an initial structure, and using active learning to systematically generalize the inferred grammar.  The resulting grammars have a large variety of applications such as reverse engineering of input formats over automatically deriving parsers to decompose inputs into their constituents.  The first and foremost application, however, will be \emph{test generation,} allowing for massive exploration of the input space simply by deriving a producer from the grammar.

Despite its successes, \EXAGRAM is but a milestone on a long path of challenges.  Future topics in learning input languages from programs include:

\begin{description}
\item[Sample-free input learning.]  This challenge is easily posed: Given a program, is it possible to synthesize an input sample that could serve as starting point for \EXAGRAM?  This is a question of \emph{test generation:} Essentially, we are looking for a sample input (or a set thereof) that covers a maximum of different input features, and thus functionality.  Unfortunately, the conditions a valid input has to fulfill are so numerous and complex that current approaches to test generation are easily challenged.

\item[Context-sensitive features.]  Many input formats use context sensitive features like prepending sizes to data blocks or check-sums in binary formats. Since these are usually checked to verify the integrity of an input, it is possible to observe these events during program execution by adding instrumentation to corresponding methods and include them in the trace. This could be used to learn input specifications that are more powerful than context free grammars.

\item[Multi-Layer Grammars.]  Complex input formats come at various \emph{layers,} such as a \emph{lexical} and a \emph{syntactical} layer for parsing programming languages first into tokens and then into trees; or a \emph{transport} layer with checksums around the actual content, which comes in its own format.  As each layer will need its own language description, the challenge will be to identify and separate them, and to come up with a coupled model for both.
\end{description}

If we were to put all these into one grand challenge, then it would be as follows: Given a \emph{compiler binary and no other information, derive an input model that can be used as a sound and precise language reference both for humans and automated testing tools.}  This is a hard challenge, and it may still take many years to solve it; but with \EXAGRAM, we feel we may have gotten a bit closer.

\bibliographystyle{abbrv}
\bibliography{tainting,grammars,ngrams}

\end{document}